\newtheorem{remark}{Remark}
\newtheorem{lemma}{Lemma}
\newtheorem{proposition}{Proposition}
\theoremstyle{definition}
\newtheorem{definition}{Definition}
\def\N{{\mathbb N}}
\def\R{{\mathbb R}}
\title{Distributed Optimisation with Linear Equality and Inequality Constraints using PDMM}
\author{Richard Heusdens \IEEEmembership{Senior Member} and Guoqiang Zhang \IEEEmembership{Member} 

\thanks{R.\ Heusdens is with the Netherlands Defence Academy (NLDA), the Netherlands, and with the Faculty of Electrical Engineering, Mathematics and Computer Science, Delft University of Technology, Delft, the Netherlands (email: r.heusdens@\{mindef.nl,tudelft.nl\}).
}
\thanks{G.\ Zhang is with the University of Exeter, Exeter, United Kingdom (email: g.z.zhang@exeter.ac.uk)}
}
\begin{document}






\maketitle

\begin{abstract}
In this paper, we consider the problem of distributed optimisation of a separable convex cost function over a graph, where every edge and node in the graph could carry both linear equality and/or inequality constraints. We show how to modify the primal-dual method of multipliers (PDMM), originally designed for linear equality constraints, such that it can handle inequality constraints as well. 
The proposed algorithm does not need any slack variables, which is similar to the recent work \cite{He23ADMMIneq} which extends the alternating direction method of multipliers (ADMM) for addressing decomposable optimisation with linear equality and inequality constraints.
Using convex analysis, monotone operator theory and fixed-point theory, we show how to derive the update equations of the modified PDMM algorithm by applying Peaceman-Rachford  splitting to the monotonic inclusion related to the lifted dual problem. To incorporate the inequality constraints, we impose a non-negativity constraint on the associated dual variables. This additional constraint results in the introduction of a reflection operator to model the data exchange in the network, instead of a permutation operator as derived for equality constraint PDMM. Convergence for both synchronous and stochastic update schemes of PDMM are provided. The latter includes asynchronous update schemes and update schemes with transmission losses. Experiments show that PDMM converges notably faster than extended ADMM of \cite{He23ADMMIneq}.

\end{abstract}




\section{Introduction}
In the last decade, distributed optimisation \cite{Boyd11ADMM} has drawn increasing attention due to the demand for either distributed signal processing or massive data processing over a pear-to-pear (P2P) network of ubiquitous devices.  Its basic principle is to first formulate an optimisation problem from the collected or manually allocated data in the devices, and then performing information spreading and fusion across the devices collaboratively and iteratively until reaching a global solution of the optimisation problem. Examples include training a machine learning model, target localisation and tracking, healthcare monitoring, power grid management, and environmental sensing.  In general, the typical challenges faced by distributed optimisation over a network, in particular ad-hoc networks, are the lack of infrastructure, limited connectivity, scalability, data heterogeneity across the network,  data-privacy requirements, and heterogeneous computational resources \cite{ Dimakis10GossipAlg, Li19Fed}. 

Depending on the applications, various methods have been developed for addressing one or more challenges in the considered network. For instance, the work \cite{Boyd06gossip,ust:10} proposed a pairwise gossip method to allow for asynchronous message-exchange in the network, while \cite{ben:10} describes a combination of gossip and geographic routing. In \cite{Iutzeler13gossipAlg}, the authors proposed a broadcast-based distributed consensus method to save communication energy. Alternatively, \cite{wai:05,sch:11} describes a belief propagation/message passing approach and \cite{shu:13,lou:15,isu:15} considers signal processing on graphs. The work in \cite{Xin18directedGraph} considered distributed optimisation over a directed graph. A special class of distributed optimisation, called federated learning, focuses on collaboratively training of a machine learning model over a centralised network (i.e., a server-client topology) \cite{ Yuan20Fed, Karimireddy20SCAFFOLD}.

A method of particular interest to this work is to approach the task of distributed signal processing via its connection with convex optimisation since it has been shown that many classical signal processing problems can be recast in an equivalent convex form \cite{luo:06}. Here we model the problem at hand as a convex optimisation problem and solve the problem using standard solvers like dual ascent, method of multipliers or ADMM \cite{Boyd11ADMM} and PDMM \cite{Zhang16PDMM,Sherson17PDMM}. The solvers ADMM and PDMM, although at first sight suggested to be different due to their contrasting derivations, are closely related \cite{Sherson17PDMM}. The derivation of PDMM, however, directly leads to a distributed implementation where no direct collaboration is required between nodes during the computation of the updates. For this reason we will take the PDMM approach to derive update rules for distributed optimisation with linear equality and inequality constraints.

PDMM  was originally designed to solve the following separable convex optimisation problem 
\begin{align}
\begin{array}{ll} \text{minimise} & {\displaystyle \sum_{i\in {\cal V}} f_i(x_i)} \\\rule[4mm]{0mm}{0mm}
\text{subject to} & A_{ij}x_i + A_{ji}x_j = b_{ij}, \quad (i,j)\in \cal  E, \label{equ:linearCond}
\end{array}
\end{align}
in a synchronous setting, 
where the undirected graph $G=(\mathcal{V},\mathcal{E})$ represents a P2P network from practice. The recent work \cite{jor:23} shows theoretically that PDMM can also be implemented asynchronously, and that it is resilient to transmission losses.  In \cite{Zhang22PDMM_central}, PDMM is modified for federated learning over a centralised network, where it is found that PDMM is closely related to the SCAFFOLD \cite{Karimireddy20SCAFFOLD} and FedSplit \cite{Pathak2021} algorithm. In addition, PDMM can be used for privacy-preserving distributed optimisation where a certain amount of privacy can be guaranteed by exploiting the fact that the (synchronous) PDMM updates take place in a certain subspace so that the orthogonal complement can be used to obfuscate the local (private) data, a method referred to a subspace perturbation \cite{Jane2020ICASSP, Jane2020TIFS, Jane2020TSP, li:23}. Moreover, it has been shown in \cite{jonkman2018quantisation} that PDMM is robust against data quantisation, thereby making it a communication efficient algorithm.

For the special case of consensus problems, where
the constraints in (\ref{equ:linearCond}) are given by  $x_i=x_j$ for all $(i,j)\in \mathcal{E}$, a large number of algorithms have been proposed in the literature. Typical methods include decentralized gradient descent (DGD) \cite{Nedic09DGD}, exact first-order algorithm (EXTRA) \cite{Shi14Extra}, distributed stochastic gradient tracking \cite{Pu18GradientTracking}, and push-sum distributed dual averaging (PS-DDA) \cite{Tsianos12pushSum}.  One major difference between PDMM and the above mentioned methods is that PDMM can be derived straightforwardly by applying Peaceman-Rachford splitting, a well-known technique for decomposable optimisation. Accordingly, the convergence analysis of PDMM can be conveniently carried out by using the existing convergence theory of Peaceman-Racheford splitting (see \cite{bau:17,Sherson17PDMM} and the analysis in this paper).

\subsection{Related work}
In recent years, a number of research works (e.g., \cite{Xu19ADMMIneq, Giesen19ADMM,Chen20ADMMIneq}) have considered applying ADMM for distributed optimisation with linear inequality constraints. The basic idea is to introduce slack variables and to reformulate the inequality constraints into equality ones. The most recent work \cite{He23ADMMIneq} is an exception and tackles the linear inequality constraints differently. The authors of \cite{He23ADMMIneq} avoid introducing slack variables in extended ADMM to handle both equality and inequality constraints via a prediction-correction updating strategy. The prediction step in extended ADMM follows a similar update structure as the one in conventional ADMM and the correction step is newly introduced to ensure algorithmic convergence. In this work, we revisit PDMM for dealing with both equality and inequality constraints by applying Peaceman-Racheford splitting to the monotonic inclusion related to the lifted dual problem. Similar to \cite{He23ADMMIneq}, no slack variables are introduced in PDMM 
 to avoid any additional transmission or computation overhead between neighbours in a P2P network. The main difference between extended ADMM and PDMM is that no additional correction step is required in PDMM to handle the inequality constraints, resulting in significant faster convergence and lower computational complexity, as is demonstrated in Section~\ref{sec:exp}. The convergence of PDMM is essentially guaranteed by the convergence theory of Peaceman-Racheford splitting.  

Another related branch of work is distributed optimisation with nonlinear inequality constraints. For instance, the work \cite{Yu17parallel} proposed an effective algorithm for minimising an objective function subject to a set of nonlinear inequality constraints. The algorithm can be implemented in a parallel manner if both the objective function and the nonlinear constraints are properly decomposable. The authors of \cite{Wu21IneqCon} further extended the work of \cite{Yu17parallel} by considering additional equality constraints by combining three algorithms, where each one is designed for a particular type of constraints. 

\subsection{Main contribution}
In this work, we consider applying PDMM for distributed optimisation with both linear equality and inequality constraints. To this purpose, we make two main contributions. Firstly, to incorporate the inequality constraints, we impose nonnegativity constraints on the associated dual variables and then, inspired by \cite{Sherson17PDMM}, derive closed-form update expressions for the dual variables via Peacheman-Rachford splitting of the monotonic inclusion related to the lifted dual problem. As mentioned earlier, no additional correction step is needed in PDMM while extended ADMM in \cite{He23ADMMIneq} must introduce an additional correction step to guarantee convergence. Secondly, we perform a convergence analysis for both synchronous and stochastic PDMM. The latter is based on stochastic coordinate descent and includes asynchronous update schemes and update schemes with transmission losses.   
In addition, we give convergence conditions that are less restrictive than the ones given in \cite{Sherson17PDMM} and \cite{jor:23} for equality constrained PDMM, where strong convexity and differentiability of the objective function is assumed.

\subsection{Organisation of the paper}

The remainder of this paper is organized as follows. Section~\ref{sec:back} introduces appropriate nomenclature and reviews  properties of monotone operators and operator splitting techniques. Section~\ref{sec:pre}  describes the problem formulation while Section~\ref{sec:operator} introduces a monotone operator derivation of PDMM with inequality constraints and demonstrates its relation with ADMM. In Section~\ref{sec:convergence} we derive convergence results of the proposed algorithm and in Section~\ref{sec:stoch} we consider a stochastic updating scheme, which includes asynchronous PDMM and PDMM with transmission losses as a special case. Finally, Section~\ref{sec:exp} describes experimental results obtained by computer simulations to verify and substantiate the underlying claims of the document and the final conclusions are drawn in Section~\ref{sec:conclusion}.

\section{Background}
\label{sec:back}

There exist many algorithms for iteratively minimising a convex function.
It is possible to derive and analyse many of these algorithms in a unified manner, using the abstraction of monotone operators.
In this section we will review some properties of monotone operators and operator splitting techniques that will be used throughout this paper. For a primer on monotone operator methods, the reader is referred to the self-contained introduction and tutorial \cite{ryu:16}. For a detailed discussion on the topic the reader is referred to \cite{bau:17}.

\subsection{Notations and functional properties}

In this work we will denote by $\N$ the set of nonnegative integers, by $\R$ the set of real numbers, by $\R^n$ the set of real column vectors of length $n$ and by 
$\R^{m\times n}$ the set of $m$ by $n$ real matrices. The symbols $\succ, \succeq, \prec$ and $\preceq$ denote generalised inequality; between vectors it represents component wise inequality. We will denote by $\| x \|$ the standard Euclidean norm of $x\in\R^n$ induced by the inner product $x^Tx$. When $x$ is updated iteratively, we write $x^{(k)}$ to indicate the update of $x$ at the $k$th iteration. When we consider $x^{(k)}$ as a realisation of a random variable, the corresponding random variable will be denoted by $X^{(k)}$ (corresponding capital). The expectation operator is denoted by $\mathbb{E}$. Let ${\cal X,Y} \subseteq \R^n$. A set valued operator $T:{\cal X}\to 2^{\cal Y}$ is defined by its graph ${\rm gra}(T) = \{ (x,y)\in {\cal X}\times{\cal Y} \,|\, y = T(x)\}$, where $2^{\cal Y}$ is the power set of $\cal Y$. We define ${\rm dom}(T) = \{x\in {\cal X} \,|\, T(x) \neq \emptyset\}$. If $T(x)$ is a singleton or empty for any $x$, then $T$ is a function or single-valued,  usually denoted by $f$. 
The notion of the inverse of $T$, denoted by $T^{-1}$, is also defined through its graph, ${\rm gra}(T^{-1}) = \{ (y,x)\in {\cal Y}\times{\cal X} \,|\, y = T(x)\}$. We denote by $J_{cT} = (I + cT)^{-1}, c>0,$ the resolvent of an operator $T$ and $C_{cT} = 2J_{cT} - I$ the associated Cayley operator, sometimes referred to as the reflected resolvent.  
The composition of two operators $T_1:{\cal X}\to 2^{\cal Y}$ and $T_2:{\cal Y}\to 2^{\cal Z}$ is given by $T_2\circ T_1 : {\cal X}\to 2^{\cal Z}$.  
The set of fixed points of $T$ is denoted by ${\rm fix}(T) = \{ x\in {\cal X} \,|\, T(x)=x\}$.

Functional transforms make it possible to investigate problems from a different perspective and sometimes simplify the analysis. In convex analysis, a suitable transform is the Legendre transform, which maps a function to its Fenchel conjugate.
The Fenchel conjugate of a function $f$ is defined as $f^{\ast}(y) = \sup_x \left( y^Tx -f(x) \right)$. The function $f$ and its conjugate $f^*$ are related by the Fenchel-Young inequality $f(x) + f^*(y) \geq y^Tx$ \cite[Proposition 13.15]{bau:17}. Furthermore, the set of all closed, proper, and convex (CCP) functions $f : \R^n \to \R \cup \{+\infty\}$ is denoted by $\Gamma_0(\R^n)$ and we denote by $\partial f$ the subdifferential of $f$. If $f\in \Gamma_0(\R^n)$, then $f=f^{**}$. Moreover, we have $y \in\partial f(x) \Leftrightarrow x\in \partial f^*(y) \Leftrightarrow f(x) + f^*(y) = y^Tx$. 
If $f\in \Gamma_0(\R^n)$, the proximity operator ${\rm prox}_{cf}$ is defined as ${\rm prox}_{cf}(x) = {\arg}\min_{u\in\R^n}\left( f(u) + \frac{1}{2c}\|x-u\|^2\right)$ and is related to the resolvent of $\partial f$ by ${\rm prox}_{cf}(x) = J_{c\partial f}(x)$ \cite[Proposition 16.44]{bau:17}. If $I_C$ is the indicator function on a closed convex subset $C$ of $\R^n$, then ${\rm prox}_{I_C} = \Pi_C$, the projection operator onto $C$.

We denote an undirected graph as $G=(\mathcal{V},\mathcal{E})$, where $\mathcal{V}$ is the set of vertices representing the  nodes in the network and $\mathcal{E}=\{(i,j)\,|\, i, j\in \mathcal{V}\}$ is the set of undirected edges in the graph representing the communication links in the network. We use ${\cal E}_{\rm d}$ to denote the set of all directed edges (ordered pairs). Therefore, $|\mathcal{E}_{\rm d}|=2|\mathcal{E}|$. 
We use $\mathcal{N}_i$ to denote the set of all neighbouring nodes of node $i$, i.e., $\mathcal{N}_i=\{j \,|\,(i,j)\in \mathcal{E}\}$.  Hence, given a graph $G=(\mathcal{V},\mathcal{E})$, only neighbouring nodes are allowed to communicate with each other directly.

\subsection{Monotone operators and operator splitting}

The theory of monotone set-valued operators plays a central role in deriving iterative convex optimisation algorithms.
A prominent example of a monotone operator is the subdifferential of a convex function, and the problem at hand is expressed as finding a zero of a monotone operator (monotone inclusion problem) which, in turn, is transformed into finding a fixed point of its associated resolvent.
The fixed point is then found by the fixed point (Banach-Picard) iteration, yielding an algorithm for the original problem. 
In this section we give background information about monotone operators and operator splitting to support the remainder of this paper.
\begin{definition}[Monotone operator] 
Let $T:\R^n\to 2^{\R^n}$. Then $T$ is monotone iff for all $x,y\in {\rm dom}(T)$
\[
(T(y)-T(x))^T(y-x) \geq 0.
\] 
The operator is said to be strictly monotone iff strict inequality holds.
The operator is said to be uniformly monotone with modulus $\phi : \R_+\to [0,+\infty)$ if $\phi$ is increasing, vanishes only at 0, and
\[
(T(y)-T(x))^T(y-x) \geq \phi( \|y-x\|).
\]
The operator is said to be strongly monotone with constant $m>0$, or $m$-strongly monotone, if $T-mI$ is monotone, i.e.,
\[
(T(y)-T(x))^T(y-x) \geq m \|y-x\|^2.
\]
The operator is said to be maximal monotone iff for every $(x,u)\in \R^n\times \R^n$,
\[
(x,u)\in{\rm gra}(T) \,\, \Leftrightarrow \,\, \big(\forall (y,v)\in{\rm gra}(T)\big) \quad (v-u)^T(y-x) \geq 0.
\]
In other words, there exists no monotone operator $S:\R^n\to 2^{\R^n}$ such that ${\rm gra}(S)$ properly contains ${\rm gra}(T)$.
\end{definition}
It is clear that strong monotonicity implies uniform monotonicity, which itself implies strict monotonicity.
\begin{definition}[Nonexpansiveness]
Let $T:\R^n\to 2^{\R^n}$. Then $T$ is nonexpansive iff for all $x,y\in {\rm dom}(T)$
\[
\|T(y)-T(x)\| \leq \|y-x\|.
\]
$T$ is called strictly nonexpansive, or contractive, if strict inequality holds. The operator is firmly nonexpansive iff
for all $x,y\in {\rm dom}(T)$
\[
\|T(y)- T(x)\|^2 \leq (T(y)-T(x))^T(y-x).
\]
\end{definition}
Note that when $T$ is (firmly) nonexpansive, it is single valued and continuous.
\begin{definition}[Averaged nonexpansive operator]
Let $T : {\rm dom}(T) \to \R^n$ be nonexpansive and let $\alpha \in (0,1)$. Then $T$ is averaged with constant $\alpha$, or $\alpha$-averaged, if there exists a nonexpansive operator $S : {\rm dom}(T) \to \R^n$ such that $T = (1-\alpha)I + \alpha S$. 
\end{definition}
It can be shown that if $T$ is maximally monotone, then the resolvent $J_{cT}$ is firmly nonexpansive \cite[Proposition 23.8]{bau:17} and the Cayley operator $C_{cT} = 2J_{cT}-I$ is nonexpansive \cite[Corollary 23.11\,(ii)]{bau:17}. We have
\[
0\in T(x) \Leftrightarrow x \in (I+cT)(x) \Leftrightarrow (I+cT)^{-1}(x) \ni x \Leftrightarrow x=J_{cT}(x),
\] 
where the last relation holds since $J_{cT}$ is single valued. Therefore, we conclude that a monotone inclusion problem is equivalent to finding a fixed point of its associated resolvent. Moreover, since $J_{cT} = \frac{1}{2}(C_{cT}+I)$ is $1/2$-averaged, we have, by the Krasnosel'skii-Mann algorithm, that the sequence generated by the Banach-Picard iteration $x^{(k+1)} = J_{cT}(x^{(k)})$ is Fej\'{e}r monotone \cite[Definition 5.1]{bau:17} and converges weakly\footnote{In the work here we only consider finite-dimensional Hilbert spaces so that weak convergence does imply strong convergence.} to a fixed point $x^*$  of $J_{cT}$ for any $x^{(0)}\in {\rm dom}(J_{cT})$ \cite[Theorem 5.15]{bau:17}, and thus to a zero of $T$.
A prime example of this procedure is the case where $T$ is the subdifferential of a convex function. In that case the Banach-Picard iteration $x^{(k+1)} = J_{c\partial f}(x^{(k)})$ results in the well known proximal point method \cite[Theorem 23.41]{bau:17}.

For many maximal monotone operators $T$, the inversion operation needed to evaluate the resolvent may be prohibitively difficult. A more widely applicable alternative is to devise an operator splitting algorithm in which $T$ is decomposed as $T=T_1 + T_2$, and the operators $T_1$ and $T_2$ are employed in separate steps. Examples of popular splitting algorithms are the forward-backward method, Tseng's method, and Peaceman-Rachford and Douglas-Rachford splitting, where the first two methods require $T_1$ (or $T_2$) to be single valued (for example the gradient of a differentiable convex function). The Peaceman-Rachford splitting algorithm is given by the iterates  \cite[Proposition 26.13]{bau:17}
\begin{align}
x^{(k)} &= J_{cT_1}(z^{(k)}, \nonumber \\
v^{(k)} &= J_{cT_2}(2x^{(k)} - z^{(k)}), \label{eq:prs} \\
z^{(k+1)} &= z^{(k)} - 2(v^{(k)} - x^{(k)}). \nonumber
\end{align}
When $T_1$ is uniformly monotone, $x^{(k)}$ converges strongly to $x^*$ (notation $x^{(k)} \to x^*$), where $x^*$ is the solution to the monotonic inclusion problem $0 \in T_1(x) + T_2(x)$.
The iterates \eqref{eq:prs} can be compactly expressed using Cayley operators as
\begin{align*}
x^{(k)} &= J_{cT_1}(z^{(k)}),\\
z^{(k+1)} &= C_{cT_2}\circ C_{cT_1} (z^{(k)}).
\end{align*}
If either $C_{cT_1}$ or $C_{cT_2}$ 
is contractive, then $C_{cT_2}\circ C_{cT_1}$ is contractive and the Peacman-Rachford iterates converge geometrically. 
Note that since $C_{cT_2}\circ C_{cT_1}$ is nonexpansive, without the additional requirement of $T_1$ being uniformly monotone, there is no guarantee that the iterates will converge. In order to ensure convergence without imposing conditions like uniform monotonicity, we can average the nonexpansive operator. In the case of $1/2$-averaging, the $z$-update is given by
\[
z^{(k+1)}  = \frac{1}{2} \left(I + C_{cT_2}\circ C_{cT_1}\right) (z^{(k)}),
\]
which is called the Douglas-Rachford splitting algorithm.   
This method was first presented in \cite{dou:56, lio:79} and converges under more or less the most general possible conditions.
A well known instance of the Douglas-Rachford splitting algorithm is the alternating direction method of multipliers (ADMM) \cite{glo:75,gab:76,eck:93,Boy:11} or the split Bregman method \cite{bre:67}.

\section{Problem Setting}
\label{sec:pre}

To simplify the discussion, we will first consider the minimisation of a separable convex cost function  subject to a set of inequality constraints of the form $Ax\preceq b$, and later generalise this to include equality constraints as well.  That is, we first consider the following problem
\begin{equation}
\begin{array}{ll} \text{minimise} & {\displaystyle \sum_{i\in {\cal V}} f_i(x_i)} \\\rule[4mm]{0mm}{0mm}
\text{subject to} & A_{ij}x_i + A_{ji}x_j \preceq b_{ij}, \quad (i,j)\in \cal  E,
\end{array}
\label{eq:pdmmop2}
\end{equation}
where $f_i : \R^{n_i}\mapsto \R\cup\{\infty\}$ are (CCP) functions, $A_{ij}\in\R^{m_{ij}\times n_i}$ and $b_{ij}\in\R^{m_{ij}}$. 
We can compactly express \eqref{eq:pdmmop2} as
\begin{equation}
\begin{array}{ll} \text{minimise} & f(x) \\\rule[4mm]{0mm}{0mm}
\text{subject to} & Ax \preceq b,
\end{array}
\label{eq:primal}
\end{equation} 
where $x = (x_1^T,\ldots,x^T_{|{\cal V}|})^T\in \R^n$, $f(x) = \sum_{i\in {\cal V}} f_i(x_i)$, $A\in\R^{m\times n}, b\in\R^m$ with $n =\sum_i n_i$ and $m= \sum_{(i,j)} m_{ij}$. More specifically, we have $A=(a_1,\ldots, a_{|{\cal V}|}), \,a_i\in\R^{m\times n_i}$, where $a_{i}(l)=A_{ij}$  and $b(l)=b_{ij}$ if and only if $e_l = (i,j)\in {\cal E}$. Assuming the graph is connected and $m\geq n$, $A$ has full column rank.
With (\ref{eq:primal}), the dual problem is given by
\begin{equation}
\begin{array}{ll}  \text{minimise} & f^*(-A^T\lambda) + b^T\lambda, \\\rule[4mm]{0mm}{0mm}
\text{subject to} & \lambda \succeq 0,
\end{array}
\label{eq:dual}
\end{equation}
with optimisation variable $\lambda \in\R^m$, where $\lambda =(\lambda_{ij})_{(i,j)\in\cal E}$ and $\lambda_{ij}\in\R^{m_{ij}}$ denotes the Lagrange multipliers associated to the constraints on edge $(i,j)\in \cal E$. At this point we would like to highlight that the only difference between inequality and equality constraint optimisation is that  with inequality constraint optimisation we have the additional requirement that $\lambda \succeq 0$. In the case the constraints are of the form $Ax=b$, the dual problem is simply an unconstrained optimisation problem.

\section{Operater splitting of the lifted dual function}
\label{sec:operator}

Let $A = (a_1,a_2,\ldots, a_{|\mathcal{V}|})$, where $a_i \in \mathbb{R}^{m\times n_i}$. Since
\[
f(x) = \sum_{i\in {\cal V}} f_i(x_i)  \quad\Leftrightarrow\quad f^*(y) = \sum_{i\in {\cal V}} f_i^*(y_i),
\]
that is, the conjugate function of a separable CCP function is itself separable and CCP, 
we have
\begin{equation}
f^*(-A^T\lambda) = \sum_{i\in {\cal V}} f^*_i(-a_i^T\lambda) =  \sum_{i\in {\cal V}} f^*_i\bigg(\!-\!\sum_{j\in {\cal N}_i}A_{ij}^T\lambda_{ij}\bigg).
\label{eq:optconstrij}
\end{equation}
By inspection of \eqref{eq:optconstrij} we conclude that every $\lambda_{ij}$,
associated to edge $(i,j)$, is used by two conjugate  functions: $f_i^*$ and $f_j^*$. As a consequence, all conjugate functions depend on each other. 
We therefore introduce auxiliary variables to decouple the node dependencies. That is,  
we introduce for each edge $(i,j)\in \cal E$ {two} auxiliary {node} variables $\mu_{i|j}$ and $\mu_{j|i}$, one for each node $i$ and $j$, respectively, and require that at convergence $\mu_{i|j} = \mu_{j|i} = \lambda_{ij}$. Collecting all auxiliary variables 
$\mu_{i|j}$ and $\mu_{j|i}$ into one vector $\mu\in\R^{2m}$ and introducing 
$C=(c_1,c_2,\ldots, c_{|\mathcal{V}|})$, $c_i\in\R^{2m\times n_i}$, where $c_{i}(l)=A_{ij}$ and $\mu(l)=\mu_{i|j}$ if and only if $e_l = (i,j)\in \cal E$ and $i<j$, and $c_{i}(l+m)=A_{ij}$ and $\mu(l+m)=\mu_{i|j}$ if and only if $e_l = (i,j)\in \cal E$ and $i>j$, we can reformulate the dual problem as 
\begin{equation}
\begin{array}{ll} \text{minimise} \; \;&f^*(-C^T\mu) + d^T\mu  \\
\rule[4mm]{0mm}{0mm}
\text{subject to} & \mu = P\mu,\\
& \rule[4mm]{0mm}{0mm} \mu \succeq 0,
\end{array}
\label{eq:exdual}
\end{equation}
where $C\in\R^{2m\times n}, \,d = \frac{1}{2}(b^T \; b^T)^T\in\R^{2m}$
and $P\in\R^{2m\times 2m}$ is a symmetric permutation matrix exchanging the first $m$ with
the last $m$ rows. That is, if $\eta = P\mu$, then $\eta_{i|j} = \mu_{j|i}$. We will refer to \eqref{eq:exdual} as the lifted dual problem of \eqref{eq:primal}.
Let $M = \{\mu\in\R^{2m} \,|\, \mu=P\mu, \, \mu \succeq 0\}$. Hence $M$ is closed and convex. With this, we can reformulate the dual problem as
\begin{equation}
 \text{minimise} \; \;f^*(-C^T\mu) + d^T\mu + I_M(\mu), 
\label{eq:newprob}
\end{equation}
where $I_M$ denotes the indicator function on $M$. Again, by comparing inequality vs.\ equality constraint optimisation, the difference is in the definition of the set $M$; for equality constraint optimisation the set $M$ reduces to $M = \{\mu\in\R^{2m} \,|\, \mu=P\mu\}$.
The optimality condition for problem \eqref{eq:newprob} is given by the inclusion problem
\begin{equation}
0\in -C \partial f^*(-C^T\mu) + d + \partial I_M(\mu).
\label{eq:optconstr}
\end{equation}

In order to apply Peaceman-Rachford splitting to \eqref{eq:optconstr}, we  define  $T_1(\mu) = -C \partial f^*(-C^T\mu) + d$ and $T_2(\mu) = \partial I_M(\mu)$. To show that both operators are maximally monotone, we have
\begin{align}
( T_1&(\mu) - T_1(\eta) )^T \!(\mu-\eta) \nonumber\\
&= -\left( \partial f^*(-C^T\mu) - \partial f^*(-C^T\eta)\right)^T \! C^T(\mu-\eta) \geq 0,
\label{eq:T1}
\end{align}
since $\partial f^*$ is monotone. Similarly, 
\[
(T_2(\mu) - T_2(\eta))^T\!(\mu-\eta) = \left( \partial I_M(\mu) - \partial I_M(\eta)\right)^T\! (\mu-\eta) \geq 0,
\]
and we conclude that both $T_1$ and $T_2$ are monotone. Maximality follows directly from the maximality of the subdifferential \cite[Theorem 20.25]{bau:17}.
As a consequence, Peaceman-Rachford splitting to \eqref{eq:optconstr} yields the iterates 
\begin{subequations}
\begin{align}
\mu^{(k)} &= J_{cT_1}(z^{(k)}),\\
z^{(k+1)} &= C_{cT_2}\circ C_{cT_1} (z^{(k)}).
\end{align}
\label{eq:iterates}
\end{subequations}

We will first focus on the Cayley operator $C_{cT_2}$ in (\ref{eq:iterates}), which carries the inequality constraints encapsulated by $M$. To do so, we introduce an intermediate vector $y^{(k)}$, such that 
\begin{align*}
y^{(k)} &= C_{cT_1} (z^{(k)}), \\
z^{(k+1)} &= C_{cT_2}(y^{(k)}).
\end{align*}
Since $M$ is a closed convex subset of $\R^n$, we have $J_{cT_2}(y) = {\rm prox}_{cI_M}(y) = \Pi_M(y)$, 
the projection of $y$ onto $M$. As a consequence, $C_{cT_2}$ is given by $C_{cT_2} =  2\Pi_M-I$,  the reflection with respect to $M$, which we will denote by
$R_M$. We can explicitly compute $\Pi_M(y)$, and thus $R_M(y)$.
\begin{lemma}
\[
J_{cT_2}(y) = \left[ \frac{1}{2}(I+P)y \right]^+,
\]
where $[\cdot]^+$ denotes the orthogonal projection onto the non-negative orthant. 
\label{lemma:proj}
\end{lemma}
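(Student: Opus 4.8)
The plan is to use that, as established just above the statement, $J_{cT_2}(y)=\Pi_M(y)$ is the Euclidean projection of $y$ onto the closed convex set $M=\{\mu\in\R^{2m}\,|\,\mu=P\mu,\ \mu\succeq 0\}$, so that it suffices to solve $\min_{u\in M}\|u-y\|^2$ explicitly. I would write $M=S\cap K$, where $S=\{\mu\,|\,\mu=P\mu\}$ is the fixed-point subspace of the involution $P$ (note $P=P^T$ and $P^2=I$) and $K=\{\mu\,|\,\mu\succeq 0\}$ is the non-negative orthant. The key structural fact to exploit is that $P$ merely swaps the first $m$ coordinates with the last $m$ coordinates, so that both the constraint $S$ and the objective decouple cleanly in this half-block partition.

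Concretely, I would partition $y=(y_1^T\ y_2^T)^T$ and $u=(u_1^T\ u_2^T)^T$ with $y_1,y_2,u_1,u_2\in\R^m$. The constraint $u=Pu$ is then equivalent to $u_1=u_2=:v$, and $u\succeq 0$ is equivalent to $v\succeq 0$. Substituting, the objective becomes
\[
\|u-y\|^2=\|v-y_1\|^2+\|v-y_2\|^2=2\left\|v-\tfrac{1}{2}(y_1+y_2)\right\|^2+\text{const},
\]
so the projection reduces to $\min_{v\succeq 0}\|v-\tfrac{1}{2}(y_1+y_2)\|^2$. Since the non-negativity constraint and the quadratic are both separable across the $m$ coordinates, each scalar subproblem $\min_{v_k\geq 0}(v_k-\tfrac12(y_{1,k}+y_{2,k}))^2$ is solved by clipping the unconstrained minimiser, i.e.\ $v_k=\max\{0,\tfrac12(y_{1,k}+y_{2,k})\}$. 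Thus $v=\big[\tfrac12(y_1+y_2)\big]^+$.

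Finally I would reassemble the solution and identify it with the claimed closed form. Because $\tfrac12(I+P)$ is precisely the orthogonal projection onto $S$ (the $+1$-eigenspace of $P$), a direct computation gives $\tfrac12(I+P)y=\big(\tfrac12(y_1+y_2)^T\ \tfrac12(y_1+y_2)^T\big)^T$, whose two half-blocks coincide; applying $[\cdot]^+$ componentwise then yields $(v^T\ v^T)^T$ with the $v$ found above, i.e.\ $u^\star=[\tfrac12(I+P)y]^+$, which lies in $M$ as required. The one point needing care — and the only place where the argument is not a rote calculation — is verifying that clipping the subspace projection to the orthant still returns a point of $S$: this holds precisely because $P$ is a coordinate permutation, so $[\cdot]^+$ acts identically on the two equal half-blocks of $\tfrac12(I+P)y$ and therefore preserves the equality $u_1=u_2$. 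Equivalently, $K$ is invariant under $P$, which is exactly what makes projecting onto $S$ first and onto $K$ second legitimate here.
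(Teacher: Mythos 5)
Your proof is correct, and it takes a genuinely different route from the paper's. The paper proceeds via Lagrangian duality: it forms the Lagrangian $L(u,\eta,\xi)=\|u-y\|^2+\eta^T(Pu-u)-\xi^Tu$, writes out the four KKT conditions, combines primal feasibility with stationarity to obtain $\tilde{u}=\tfrac{1}{2}(I+P)y+\tfrac{1}{4}(I+P)\tilde{\xi}$, and then performs a three-way case analysis on the sign of $\tfrac{1}{2}(y_\ell+y_{\ell+m})$ using complementary slackness to eliminate $\tilde{\xi}$. You instead eliminate the subspace constraint outright: parametrising $M$ by $u_1=u_2=v\succeq 0$, completing the square to reduce the problem to $\min_{v\succeq 0}\|v-\tfrac{1}{2}(y_1+y_2)\|^2$, and solving that separable orthant projection by coordinatewise clipping. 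Your route is more elementary and shorter --- it needs no dual variables and no case analysis, and it makes the geometric content transparent ($\tfrac{1}{2}(I+P)$ is the orthogonal projector onto the fixed-point subspace $S$ of $P$, and the lemma says projection onto $M=S\cap K$ factors as subspace projection followed by clipping, legitimate because clipping preserves the equal-half-blocks structure). The paper's KKT machinery is heavier but is the approach that would still apply if $M$ lacked this special product-like structure, and it exhibits the optimal multipliers explicitly. One small caution on your closing remark: the identity $\Pi_{S\cap K}=\Pi_K\circ\Pi_S$ is not true for arbitrary closed convex $S$, $K$, so it is important (as you do) that the substitution argument stands on its own; the invariance observation is then a correct a posteriori explanation rather than a citable general fact, unless you also supply the standard argument that $\Pi_K(S)\subseteq S$ suffices when $S$ is a subspace.
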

\begin{proof}
We have
\begin{equation}
    J_{cT_2}(y) = \arg\min_{u\in M} \| u -y\|^2.
    \label{eq:minu}
\end{equation}
The corresponding Lagrangian is given by $L(u,\eta,\xi) =  \| u -y\|^2 + \eta^T(Pu-u) - \xi^Tu$. Let $\tilde{u}$ denote the optimal point of \eqref{eq:minu} and let $\tilde{\xi}$ and $\tilde{\eta}$ denote the optimal dual variables. With this, the KKT conditions are given by 
\begin{subequations}
\begin{align}
1. \,\,\,&  \tilde{u} = P\tilde{u}, \, \tilde{u} \succeq 0, \label{pf}\\
2. \,\,\,& \tilde{\xi} \succeq 0, \label{df} \\
3. \,\,\,& \tilde{\xi} \odot \tilde{u} = 0, \label{cs}\\
4.\,\,\,& 2(\tilde{u}-y) +(P-I)^T\tilde{\eta} - \tilde{\xi} = 0, \label{minL}
\end{align}
\end{subequations}
where $\odot$ denotes  component-wise multiplication. Combining \eqref{pf} and \eqref{minL} we obtain $\tilde{u} = \frac{1}{2}(I+P)y+ \frac{1}{4}(I+P) \tilde{\xi}$ so that for $\ell =1,\ldots, m: \tilde{u}_\ell = \tilde{u}_{\ell+m} = \frac{1}{2}(y_\ell + y_{\ell+m}) + \frac{1}{4}(\tilde{\xi}_\ell + \tilde{\xi}_{\ell+m})$. Hence,
if $\frac{1}{2}(y_\ell + y_{\ell+m}) > 0$, then $\tilde{u}_\ell > 0$ by \eqref{df} and thus $\tilde{\xi}_\ell=0$ by \eqref{cs}. If $\frac{1}{2}(y_\ell + y_{\ell+m})< 0$, then $\tilde{\xi}_\ell > 0$ by \eqref{pf} and thus $\tilde{u}_\ell=0$ by \eqref{cs}. If $\frac{1}{2}(y_\ell + y_{\ell+m}) =0$, then $\tilde{u}_\ell \geq 0$ by \eqref{df}. However, if $\tilde{u}_\ell > 0$, then $\tilde{\xi}_\ell=0$ by \eqref{cs}, and thus $\tilde{u}_\ell=0$, which is a contradiction. Hence $\tilde{u}_\ell=0$. This completes the proof. 
\end{proof}

Recall that $C_{cT_2} =  2\Pi_M-I = R_M$.
To get some insight in how to implement $R_M$, note that $R_M(y) = \left[(I+P)y \right]^+ - y$ where the orthogonal projection onto the non-negative orthant is due to the non-negativity constraint of $\lambda$ (and thus of $\mu$). Without this constraint, we have $J_{cT_2}(y) = \frac{1}{2}(I+P)$ and thus  $C_{cT_2} = P$, which is simply a permutation operator.  This permutation operator represents the actual data exchange in the network. That is, we have for all $(i,j)\in{\cal E} : z_{i|j} \leftarrow y_{j|i}, \, z_{j|i} \leftarrow y_{i|j}$. In the case of inequality constraints, however, we only exchange data whenever\footnote{In the case $y_{i|j}$ and $y_{j|i}$ are vector-valued, we have to do the thresholding component wise.} $y_{i|j} + y_{j|i}>0$ and locally update $z_{i|j} \leftarrow -y_{i|j}, \, z_{j|i} \leftarrow -y_{j|i}$ otherwise. Figure~\ref{fig:ref} illustrates the effect of $R_M$ for a two-dimensional example, where $\mathbf{1} = (1,\,1)^T$. If $y$ is in the halfspace $\{u : u^T\mathbf{1}>0\}$ we have $z = Py$, and $z=-y$ otherwise.
\begin{figure}[t]
\centering
\includegraphics[width=75mm]{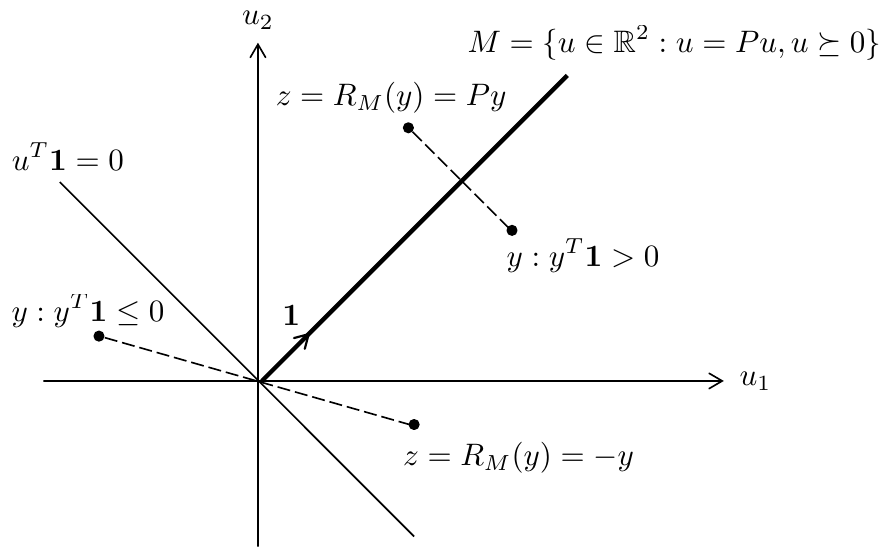}
\caption{Illustration of the reflection operator $R_M$. } 
\label{fig:ref}
\end{figure}

The iterates \eqref{eq:iterates} can now be expressed as
\begin{align*}
\mu^{(k)} &= J_{cT_1}(z^{(k)}), \\
y^{(k)} &= 2\mu^{(k)}  - z^{(k)}, \rule[4mm]{0mm}{0mm}\\
z^{(k+1)} &= R_M(y^{(k)}).\rule[4mm]{0mm}{0mm}
\end{align*}
In order to find a dual expression for $J_{cT_1}(z^{(k)})$, 
we note that
\[
\tilde{\mu}= J_{cT_1}(z) \quad \Leftrightarrow \quad z - \tilde{\mu} \in cT_1(\tilde{\mu}).
\]
Hence, $\tilde{\mu} = z +c( C\tilde{x} - d)$ where $\tilde{x} \in \partial f^*(-C^T\tilde{\mu})$, {and thus $-C^T\tilde{\mu} \in \partial f(\tilde{x})$. Hence, 
$0 \in \partial f(\tilde{x}) + C^T\tilde{\mu} =  \partial f(\tilde{x}) + C^Tz + cC^T(C\tilde{x} - d)$ so that
\[
\tilde{x} = \arg\min_x\left(f(x) + z^TCx + \frac{c}{2}\|Cx-d\|^2\right).
\]
With this, the iterates can be expressed as
\begin{subequations}
\begin{align}
x^{(k)} &=\displaystyle \arg\min_x\left(f(x) + z^{(k)\raisebox{.7mm}{\scriptsize $T$\!}}Cx + \frac{c}{2}\|Cx-d\|^2\right),  \\
\mu^{(k)} &= z^{(k)} + c(Cx^{(k)}-d), \rule[4mm]{0mm}{0mm} \label{eq:pdmmmu} \\
y^{(k)} &= 2\mu^{(k)}  - z^{(k)}, \rule[5mm]{0mm}{0mm}  \\
z^{(k+1)} &= R_M(y^{(k)}),\rule[5.5mm]{0mm}{0mm} 
\end{align}
\end{subequations}
which can be simplified to
\begin{subequations}
\label{eq:pdmm}
\begin{align}
x^{(k)} &=\displaystyle \arg\min_x\left(f(x) + z^{(k)\raisebox{.7mm}{\scriptsize $T$\!}}Cx + \frac{c}{2}\|Cx-d\|^2\right),  \label{eq:pdmma}\\
y^{(k)} &= z^{(k)} + 2c(Cx^{(k)}-d), \rule[4mm]{0mm}{0mm}  \label{eq:pdmmb} \\
z^{(k+1)} &= R_M(y^{(k)}).\rule[5.5mm]{0mm}{0mm} \label{eq:pdmmc}
\end{align}
\end{subequations}
The iterates \eqref{eq:pdmm} are collectively referred to as the {\em inequality-constraint primal-dual method of multipliers} (IEQ-PDMM).

The distributed nature of PDMM can be made explicit by exploiting the structure of $C$ and $d$
and writing out the update equations \eqref{eq:pdmm}, which is visualised in the pseudo-code of Algorithm~\ref{alg:pdmm}. 
It can be seen that no direct collaboration is required between nodes during the computation of these updates, leading to an attractive (parallel) algorithm for optimisation in practical networks. The update \eqref{eq:pdmmc} can be interpreted as one-way transmissions of the auxiliary $y$ variables to neighbouring nodes where the actual update of the $z$ variables is done.
%
%
\begin{algorithm}[t]
\caption{Synchronous IEQ-PDMM.}\label{alg:PDMMuni}
\begin{algorithmic}[1]
\State\textbf{Initialise:}$\quad z^{(0)}\in\mathbb{R}^{2m}$ \Comment{Initialisation}
\For{$k=0,...,$}
    \For{$i\in \cal V$}\Comment{Node updates}
        \State $\displaystyle x_i^{(k)} =\arg\min_{x_i}\bigg( f_i(x_i)+$
        \Statex \hspace{2.3cm} $\sum_{j\in\mathcal{N}_i}\!\left(z_{i|j}^{(k)\raisebox{.7mm}{\scriptsize $T$\!}}A_{ij}x_i +  \frac{c}{2}\|A_{ij}x_i - \frac{1}{2}b_{ij}\|^2  \right)\!\!\bigg)$
        \ForAll{$j\in\mathcal{N}_i$}
            \State $y_{i|j}^{(k)} = z_{i|j}^{(k)} + 2c \left(A_{ij}x_i^{(k)} - \frac{1}{2}b_{ij}\right)$
        \EndFor
    \EndFor
    \item[]
    \ForAll{$i\in{\cal V}, j\in{\cal N}_i$}\Comment{Transmit variables}
        \State $\textbf{node}_j\leftarrow\textbf{node}_i(y_{i|j}^{(k)})$
    \EndFor
    \item[]
    \ForAll{$i\in{\cal V}, j\in{\cal N}_i$}\Comment{Auxiliary updates}
        \If{$y_{i|j}^{(k)} + y_{j|i}^{(k)} > 0$}
            \State $z_{i|j}^{(k+1)} = y_{j|i}^{(k)}$
        \Else
            \State $z_{i|j}^{(k+1)} =- y_{i|j}^{(k)}$
        \EndIf
    \EndFor
\EndFor
\end{algorithmic}
\label{alg:pdmm}
\end{algorithm}

\subsection{Equality and inequality constraints}

As mentioned before the only difference in having equality or inequality constraints is in having  a nonnegativity constraint $\lambda\succeq 0$ in the latter case, and thus in the definition of the set $M$. Hence, we can trivially extend our proposed inequality constraint algorithm to include equality constraints as well. In the case of an equality constraint, we simply ignore the thresholding and exchange the associated auxiliary variables along that edge. That is, let $\mu = (\mu^T_\nu, \mu^T_\lambda)^T$, 
where $\mu_\nu$ denote the Lagrange multipliers for the equality constraints and $\mu_\lambda$ the Lagrange multipliers for the inequality constraints. Then $\mu_{\lambda} \succeq 0$, while $\mu_\nu$ is unconstrained. Defining the auxiliary variables $y$ and $z$ in a similar way, \eqref{eq:pdmmc} becomes
\begin{align*}
z^{(k+1)}_\nu &= P y^{(k)}, \\
z^{(k+1)}_\lambda &= R_M(y^{(k)}). 
\end{align*}

\subsection{Node constraints}
\label{sec:node}

In the previous sections we considered constraints of the form $A_{ij}x_i + A_{ji}x_j \preceq b_{ij}$, or $A_{ij}x_i + A_{ji}x_j = b_{ij}$ in the case of equality constraints. If we set $A_{ji}$ to be the $m_{ij}\times n_j$ zero matrix, we have constraints of the form $A_{ij}x_i \preceq b_{ij}$ or $A_{ij}x_i = b_{ij}$, which are node constraints; it sets constraints on the values  $x_i$ can take on. 
Even though $x_j$ is not involved in the constraint anymore, there is still communication needed between node $i$ and node $j$ since at the formulation of the lifted dual problem \eqref{eq:exdual} we have introduced two auxiliary variables, $\mu_{i|j}$ and $\mu_{j|i}$, one at each node, to control the constraints between node $i$ and $j$. This was done independent of the actual value of $A_{ij}$ and $A_{ji}$. In order to guarantee convergence of the algorithm, these variables need to be updated and exchanged during the iterations. Note that it is irrelevant which of the neighbouring nodes is used to define the node constraint on node $i$. We could equally well define $A_{i\ell}x_i \preceq b_{i\ell}$ with $\ell\in{\cal N}_i$, in which case there will be communication between node $i$ and node $\ell$. To avoid such communication between nodes, we can introduce dummy nodes, one for every node that has a node constraint. Let $i'$ denote the dummy node introduced to define the node constraint on node $i$. That is, we have $A_{ii'}x_i \preceq b_{ii'}$. Since dummy node $i'$ is only used to communicate with node $i$, it is a fictive node and can be incorporated in node $i$, thereby avoiding any network communication for node constraints. 

\subsection{Relation with ADMM}

Consider the prototype ADMM problem given by
\begin{equation}
\begin{array}{ll} \text{minimise} & f(x) + g(u), \\\rule[4mm]{0mm}{0mm}
\text{subject to} & Ax + Bu = c.
\end{array}
\label{eq:admm}
\end{equation} 
Following \cite{Sherson17PDMM}, we can reformulate \eqref{eq:primal} in the form \eqref{eq:admm} by introducing auxiliary variables $u_{i|j}, u_{j|i} \in \R^{m_{ij}}$ such that $u_{i|j} = A_{ij}x_i - \frac{1}{2}b_{ij}$ and $u_{j|i} = A_{ji}x_j -  \frac{1}{2}b_{ij}$. Collecting all auxiliary variables $u_{i|j}$ and $u_{j|i}$  into a vector $u\in\R^{2m}$ and using the matrices $C,P$ and $d$ as defined before, the constraints of \eqref{eq:primal} are given by $u = Cx- d$ and $u + Pu  \preceq 0$.  Hence, \eqref{eq:primal} can be equivalently expressed as
\[
\begin{array}{ll} \text{minimise} & f(x) + g(u) \\\rule[4mm]{0mm}{0mm}
\text{subject to} & Cx  - u = d,
\end{array}
\]
where $g(u)$ is the indicator function $I_{M'}$ on $M' =  \{u \in\R^{2m} \,|\, u+Pu \preceq 0\}$. The dual problem is therefore given by
\begin{equation}
 \text{minimise} \; \;f^*(-C^T\mu) + I^*_{M'}(\mu) + d^T\mu,
\label{eq:newprob2}
\end{equation}
where $\mu$, as in the PDMM case, denotes the stacked vector of dual variables $\mu_{i|j}$ and $\mu_{j|i}$ associated with the edges $(i,j)\in \cal E$.
The ADMM algorithm is equivalent to applying Douglas Rachford splitting to the dual problem  \eqref{eq:newprob2}.
Comparing \eqref{eq:newprob} and  \eqref{eq:newprob2}, we can note that the apparent difference in the dual problems is the use of
$I_M(\mu)$ in the case of PDMM and $I^*_{M'}(\mu)$ in the case of ADMM. However, we have
\[
I^*_{M'}(\mu) = \sup_u \left( \mu^Tu - I_{M'}(u)\right) = \left\{ \begin{array}{ll} 0, & \mu = P\mu, \, \mu\succeq 0\\ \infty \rule[5mm]{0mm}{0mm}& \text{otherwise,} \end{array} \right.
\]
and thus $I^*_{M'}(\mu) = I_M(\mu)$ and we conclude that the problems \eqref{eq:newprob} and  \eqref{eq:newprob2} are identical. 
As Douglas-Rachford splitting is equivalent to a half-averaged form of Peaceman-Rachford splitting, half-averaged PDMM  and ADMM will give identical results.

\section{Convergence of (in)equality-constraint PDMM}
\label{sec:convergence}

Let $T = C_{cT_2}\circ C_{cT_1}$. Since both $C_{cT_2}$ and $C_{cT_1}$ are nonexpansive, $T$ is nonexpansive, and  the sequence generated by the Banach?Picard iteration $z^{(k+1)} = T(z^{(k)})$ may fail to produce a fixed point of $T$. A simple example of this situation is $T=-I$ and $z^{(0)} \neq 0$.
Although operator averaging provides a means of ensuring algorithmic convergence, it is well known that Banach-Picard iterations converge provable faster than Krasnosel'skii-Mann iterations for the important class of quasi-contractive operators \cite{ber:04}. 
As discussed before, the Peaceman-Rachford splitting algorithm converges when $T_1$ is uniformly monotone. However, by inspection of \eqref{eq:T1}, due to the row-rank deficiency of $C$, $\exists (\mu,\eta), \mu\neq \eta : C^T(\mu-\eta) = 0$ which prohibits $T_1$ of being strictly monotone, and thus uniformly monotone. It is therefore of interest to consider if there are milder conditions under which 
certain optimality can be guaranteed. Whilst such conditions may be restrictive in the case of convergence of the auxiliary variables, in the context of distributed optimisation we are often only interested in primal optimality. For this reason we define conditions that ensure $x^{(k)}\to x^*$ even if $z^{(k)} \not\to z^*, z^*\in {\rm fix}(T)$. 

\begin{proposition}
Let $T_1 = -C \partial f^*(-C^T(\cdot)) + d$ and $T_2 = \partial I_M$ such that ${\cal Z} = {\rm fix}(T)\neq \emptyset$ and $\partial f$ is uniformly monotone with modulus $\phi$, let $c>0$, and let $x^*$ be the solution to the primal problem \eqref{eq:primal}. Given the iterates \eqref{eq:iterates} and $z^{(0)} \in \R^{2m}$, we have $x^{(k)} \to x^*$.
\label{prop:conv}
\end{proposition}
\begin{proof}
Let $z^* \in {\cal Z}$. We have for all $k\in\N$,
\begin{align}
\|z^{(k+1)} - z^*\|^2 &= \|C_{cT_2}\circ C_{cT_1} (z^{(k)})  - C_{cT_2}\circ C_{cT_1} (z^*)\|^2 \nonumber \\
&\leq \|C_{cT_1} (z^{(k)}) -  C_{cT_1} (z^*)\|^2  \nonumber \\
&= \|2\mu^{(k)} - z^{(k)} - (2\mu^* - z^*)\|^2 \nonumber \\
&= \|z^{(k)}-z^*\|^2 \nonumber \\
&\hspace{4mm} -4(\mu^{(k)} -\mu^*)^T(z^{(k)}-\mu^{(k)} - (z^*-\mu^*)) \nonumber \\
&=  \|z^{(k)}-z^*\|^2 \nonumber \\
&\hspace{4mm}+ 4c(\mu^{(k)} -\mu^*)^TC(x^{(k)}-x^*),
\label{eq:zCT}
\end{align}
where the last equality follows from \eqref{eq:pdmmmu}. Moreover, since $x^{(k)}$ minimises $f(x) + z^{(k)^T}Cx + \frac{c}{2}\|Cx-d\|^2$, we have that
$0\in \partial f(x^{(k)}) + C^Tz^{(k)} + cC^T(Cx^{(k)}-d) =  \partial f(x^{(k)}) + C^T\mu^{(k)}$, so that
\eqref{eq:zCT} can be expressed as
\begin{align}
\|z^{(k+1)} - z^*\|^2 &\leq  \|z^{(k)}-z^*\|^2 \nonumber \\
&\hspace{4mm} - 4c(\partial f(x^{(k)}) -\partial f(x^*))^T (x^{(k)}-x^*)  \nonumber \\
&\leq  \|z^{(k)}-z^*\|^2 - 4c \phi(\|x^{(k)}-x^*\|).
\label{eq:monof}
\end{align}
Hence, $\phi(\|x^{(k)}-x^*\|)\to 0$ and, in turn, $\|x^{(k)}-x^*\|\to 0$. 
\end{proof}
\begin{remark}
Since $T$ is at best nonexpansive, the auxiliary variables will not converge in general. In fact, they will reach an alternating limit state, similar to what has been shown for equality constraint PDMM \cite{Sherson17PDMM}. In addition,
the condition for primal convergence given in Proposiiton~\ref{prop:conv} is less restrictive than the ones given in \cite{Sherson17PDMM} for equality constrained PDMM, where strong convexity and differentiability of $f$ is assumed. We will demonstrate the convergence of the algorithm for non-differentiable uniformly convex functions in Section~\ref{sec:exp}.
\end{remark}

\section{Stochastic coordinate descent}
\label{sec:stoch}
In order to obtain an asynchronous (averaged) IEQ-PDMM algorithm, we will apply randomised coordinate descent to the algorithms presented in Section~\ref{sec:operator}. 

Stochastic updates can be defined by assuming that each auxiliary variable $z_{i|j}$ can be updated based on a Bernoulli random variable $\xi_{i|j}\in\{0,1\}$. 
Collecting all random variables $\xi_{i|j}$ in the random vector $\xi\in\mathbb{R}^{2|{\cal E}|}$, following the same ordering as the entries of $z$, 
let $(\xi^{(k)})_{k\in\mathbb{N}}$ denote an i.i.d.\ random process defined on a common probability space $(\Omega, {\cal A},{\cal P})$, such that $\xi^{(k)} : (\Omega,{\cal A}) \mapsto \{0,1\}^{2|{\cal E}|}$. 
Hence, $\xi^{(k)}(\omega) \subseteq \{0,1\}^{2|{\cal E}|}$ indicates  which entries of $z^{(k)}$ will be updated at iteration $k$. We assume that the following condition holds:
\begin{equation}
    (\forall (i,j) \in {\cal E}_{\rm d}) \quad {\cal P}(\{\xi_{i|j}^{(0)} = 1\})>0.
\label{eq:xiex}
\end{equation}
Since $(\xi^{(k)})_{k\in\mathbb{N}}$ is i.i.d., \eqref{eq:xiex} guarantees that at every iteration, entry $z_{i|j}^{(k)}$ has nonzero probability to be updated. We define the block-diagonal random matrix $U\in\R^{2m\times 2m}$ as $U = {\rm diag}(\xi_{i|j} I_{m_{ij}})$.
With this,  we define the {\em stochastic} Banach-Picard iteration \cite{jor:23} as
\begin{equation}
    Z^{(k+1)} = \big(I-U^{(k)}\big)Z^{(k)}+U^{(k)}T(Z^{(k)}),
\label{eq:sbp}  
\end{equation}
where $Z^{(k)}$ denotes the random variable having realisation $z^{(k)}$.
If $T$ is $\alpha$-averaged, a convergence proof is given in \cite{lut:13,bia:16}, where it is shown that $Z^{(k)}  - T(Z^{(k)} ) \stackrel{\rm a.s.}{\to} 0$  (almost surely).
If $T$ is not averaged, we do not have convergence in general since $T$ is at best nonexpansive and we need additional conditions for convergence. 

Let $\|z\|_Q^2 = z^T Qz$ where $Q\succ 0$ (Hermitian positive definite). Moreover, let $Q^{-1} = \mathbb{E}(U)$. Clearly, $Q \succ 0$ by condition \eqref{eq:xiex}. In addition, let $({\cal A}_k)_{k\geq 1}$ be a filtration on $(\Omega,{\cal A})$ such that
\[
{\cal A}_k := \sigma\{\xi^{(t)}: t\leq k\},
\]
the $\sigma$-algebra generated by the random vectors $\xi^{(1)},\ldots,\xi^{(k)}$ and thus ${\cal A}_k \subseteq {\cal A}_l$ for $k\leq l$. We have the following convergence result for stochastic PDMM.
\begin{proposition}
Let $T_1 = -C \partial f^*(-C^T(\cdot)) + d$ and $T_2 = \partial I_M$ such that ${\cal Z} = {\rm fix}(T)\neq \emptyset$ and $\partial f$ is uniformly monotone with modulus $\phi$, let $c>0$, and let $x^*$ be the solution to the primal problem \eqref{eq:primal}. Given the stochastic iteration \eqref{eq:sbp} and $z^{(0)} \in \R^{2m}$, we have $X^{(k)} \stackrel{\rm a.s.}{\to} x^*$.    
\label{prop:spdmm}
\end{proposition}
\begin{proof}
For any $z^* \in {\cal Z}$ we have \cite[Appendix A]{Jordan23PDMMAsyn}
\begin{align}
\mathbb{E} &\left( \|Z^{(k+1)} - z^*\|_Q^2 \, | \, {\cal A}_k \right) = \nonumber \\
&\hspace{8mm}\|Z^{(k)} - z^*\|_Q^2 +  \| T(Z^{(k)}) - z^*\|_2^2 - \|Z^{(k)} - z^*\|_2^2.
\label{eq:Eless}
\end{align}
Using \eqref{eq:monof}, \eqref{eq:Eless} becomes
\begin{align}
\mathbb{E} \left( \|Z^{(k+1)} - z^*\|_Q^2\right. & \left. \!\!| \, {\cal A}_k \right) \leq \nonumber \\
&\|Z^{(k)} - z^*\|_Q^2 - 4c\phi(\|X^{(k)}-x^*\|),
\label{eq:mart}
\end{align}
which shows that $(\|Z^{(k)} - z^*\|_Q^2)_{k\geq 1}$ is a nonnegative supermartingale. Moreover, since  $(\,\cdot\,)^{1/2}$ is concave and nondecreasing on $\R_+$, we conclude 
that $(\|Z^{(k)} - z^*\|_Q)_{k\geq 1}$ is a nonnegative supermartingale as well and therefore converges almost surely by the martingale convergence theorem \cite[Theorem 27.1]{jac:04}. 
Taking expectations on both sides of \eqref{eq:mart} and iterating over $k$, 
we obtain 
\[
    \mathbb{E}\left(\|Z^{(k+1)} - z^*\|_Q^2\right) \leq \|z^{(0)} - z^*\|_Q^2 - 4c\sum_{t=1}^{k}\mathbb{E}\left(\phi(\|X^{(t)}-x^*\|)\right).
\]
Since $\mathbb{E}\left(\|Z^{(k)} - z^*\|_Q^2\right)\geq 0$, we have 
\begin{equation*}
    \sum_{t=1}^{k}\mathbb{E}\left(\phi(\|X^{(t)}-x^*\|)\right)\leq\frac{1}{4c}\|z^{(0)} - z^*\|_Q^2<\infty,
\end{equation*}
which shows that the sum of the expected values of the primal error is bounded. Hence, using Markov's inequality, we conclude that
\[
  \sum_{t=1}^\infty {\rm Pr}\left\{\|X^{(t)}-x^*\|^2\geq\epsilon\right\} \leq\frac{1}{\epsilon} \sum_{t=1}^\infty\mathbb{E}\left[\|X^{(t)}-x^*\|^2\right]<\infty,
\]
for all $\epsilon>0$,
and by Borel Cantelli's lemma \cite[Theorem 10.5]{jac:04}  that
\[
    {\rm Pr}\left\{\limsup_{k\rightarrow\infty}\left(\|X^{(k)}-x^*\|^2\geq\epsilon\right)\right\}=0,
\]
which shows that $\|X^{(k)}-x^*\|^2\stackrel{\rm a.s.}{\to} 0$. 
\end{proof}
\begin{remark}
As with synchronous IEQ-PDMM,
the condition for primal convergence given in Proposiiton~\ref{prop:spdmm} is less restrictive than the ones given in \cite{jor:23} for equality constrained stochastic PDMM, where strong convexity and differentiability of $f$ is assumed.
\end{remark}

\subsection{Asynchronous IEQ-PDMM}

In practice, synchronous algorithm operation implies the presence of a global clocking system between nodes. Clock synchronisation, however, in particular in large-scale heterogeneous sensor networks, can be cumbersome. In addition, due to the heterogeneous nature of the sensors/agents,  processors that are fast either because of high computing power or because of small workload per iteration, must wait for the slower processors to finish their iteration. Asynchronous algorithms
partly overcome these problems as there is much more flexibility regarding the use of the information received from other processors. Asynchronous IEQ-PDMM can be seen as a special case of stochastic IEQ-PDMM when we update a set of auxiliary variables simultaneously. That is, at each iteration, a single node, or possibly a subset of nodes chosen at random, are activated. More formally,
let $(\zeta^{(k)})_{k\in\mathbb{N}}$ denote an i.i.d.\ random process defined on a common probability space such that $\zeta^{(k)} \subseteq 2^{\cal V}$ denotes a set of indices indicating which nodes will be updated at iteration $k$. Hence, $\zeta^{(k)}$ denotes the set of {\em active nodes} at iteration $k$.
Asynchronous IEQ-PDMM can be seen as a specific case of stochastic IEQ-PDMM when we define the entries of $\xi^{(k)}$ as 
\[
(\forall (i,j) \in {\cal E}) \quad
 \xi_{j|i}^{(k)} = \left\{ \begin{array}{ll} 1 & \text{if } i \in \zeta^{(k)}, \\ 0 & \text{otherwise}. \rule[4mm]{0mm}{0mm}\end{array} \right.
\]
That is, at iteration $k$, we update all auxiliary variables $\xi^{(k)}_{i|j}, \, j\in{\cal N}_i,$ for all nodes  $i\in\zeta^{(k)}$. The pseudocode for lossy asynchronous IEQ-PDMM is given in Algorithm \ref{alg:asyn}.

\subsection{IEQ-PDMM with transmission failures}

IEQ-PDMM with transmission losses can also be seen as a special case of stochastic IEQ-PDMM. 
Let $(\eta^{(k)})_{k\in\mathbb{N}}$ denote an i.i.d.\ random process defined on a common probability space such that $\eta^{(k)} \subseteq 2^{{\cal E}_{\rm d}}$ denotes a set of ordered pairs of nodes indicating which directed edges will be updated at iteration $k$. Hence, $\eta^{(k)}$ denotes the set of {\em active directed edges} at iteration $k$; $(i,j)\in\eta^{(k)}$ implies that there has been a successful transmission from node $i$ to node $j$, but we could have a transmission failure from node $j$ to $i$.
IEQ-PDMM with transmission losses can thus be seen as a specific case of stochastic IEQ-PDMM when we define the entries of $\xi^{(k)}$ as  
\[
(\forall (i,j) \in {\cal E}_{\rm d}) \quad
 \xi_{j|i}^{(k)} = \left\{ \begin{array}{ll} 1 & \text{if } (i,j) \in \eta^{(k)}, \\ 0 & \text{otherwise}. \rule[4mm]{0mm}{0mm}\end{array} \right.
\]
Obviously, a combination of asynchronous updating and transmission loss can be modelled by defining
\[
(\forall (i,j) \in {\cal E}_{\rm d}) \quad
 \xi_{j|i}^{(k)} = \left\{ \begin{array}{ll} 1 & \text{if } i \in \zeta^{(k)} \text{ and } (i,j) \in \eta^{(k)}, \\ 0 & \text{otherwise}. \rule[4mm]{0mm}{0mm}\end{array} \right.
\]
\begin{algorithm}[t]
\caption{Asynchronous  IEQ-PDMM.}\label{alg:asyn}
\begin{algorithmic}[1]
\State\textbf{Initialise:}$\quad z^{(0)}\in\mathbb{R}^{2m}$\Comment{Initialisation}
\For{$k=0,...,$}
    \State Select a random subset of active nodes: $\zeta^{(k)}\subseteq 2^\mathcal{V}$
    \For{$i\in \zeta^{(k)}$}\Comment{Active node updates}
        \State $\displaystyle x_i^{(k)} =\arg\min_{x_i}\bigg( f_i(x_i)+$
        \Statex \hspace{2.3cm} $\sum_{j\in\mathcal{N}_i}\!\left(z_{i|j}^{(k)\raisebox{.7mm}{\scriptsize $T$\!}}A_{ij}x_i +  \frac{c}{2}\|A_{ij}x_i - \frac{1}{2}b_{ij}\|^2  \right)\!\!\bigg)$
        \ForAll{$j\in\mathcal{N}_i$}
            \State $y_{i|j}^{(k)} = z_{i|j}^{(k)} + 2c \left(A_{ij}x_i^{(k)} - \frac{1}{2}b_{ij}\right)$
        \EndFor
    \EndFor
    \item[]
    \ForAll{$i\in\zeta^{(k)}, j\in{\cal N}_i$}\Comment{Transmit variables}
        \State $\textbf{node}_j\leftarrow\textbf{node}_i(y_{i|j}^{(k)})$
    \EndFor
    \item[]
    \For{$i\in \zeta^{(k)},j\in\mathcal{N}_i$}\Comment{Auxiliary updates}
        \If{$y_{i|j}^{(k)} + y_{j|i}^{(k)} > 0$}
            \State $z_{j|i}^{(k+1)} = y_{i|j}^{(k)}$
        \Else
            \State $z_{j|i}^{(k+1)} =- y_{j|i}^{(k)}$
        \EndIf
    \EndFor
\EndFor
\end{algorithmic}
\end{algorithm}

\section{Numerical experiments}
\label{sec:exp}

In this section we will discuss experimental results obtained by computer simulations. We will start by demonstrating that the relaxed condition ($\partial f$ being uniformly monotone) as given in Proposition~\ref{prop:conv} and Proposition~\ref{prop:spdmm} is a sufficient condition for primal convergence. We show convergence results for synchronous and asynchronous IEQ-PDMM, and demonstrate the robustness of the algorithm against transmission faillures. Secondly, we will discuss an application of network linear programming, where we collaboratively compute the intersection of convex polytopes for target localisation.  Finally, we will compare the proposed algorithm with extended ADMM \cite{He23ADMMIneq} and a PDMM variant where we introduced slack variables to handle the inequality constraints.  

\subsection{Primal convergence guarantees }
To demonstrate that PDMM doesn't converge for general problems, 
 we consider the following $\ell_1$ consensus problem:
\begin{equation}
\begin{array}{ll} \text{minimise} & \displaystyle\sum_{i=1}^n\|x_i-a_i\|_1 \\ 
\text{subject to} &   x_i=x_j, \; (i,j)\in {\cal E},
\rule[4mm]{0mm}{0mm}
\end{array}
\label{eq:l1}
\end{equation}
where the data $a_i$ was randomly generated from a Gaussian distribution.
We consider a random geometric graph of $n=25$ nodes where we have set the communication radius $r = \sqrt{2\log(n)/n}$, thereby guaranteeing a connected graph with probability at least $1-1/n^2$ \cite{dal:02}.
The resulting graph is depicted in Figure~\ref{fig:graph_25nodes}.
\begin{figure}[t]
\centering
\includegraphics[width=.345\textwidth]{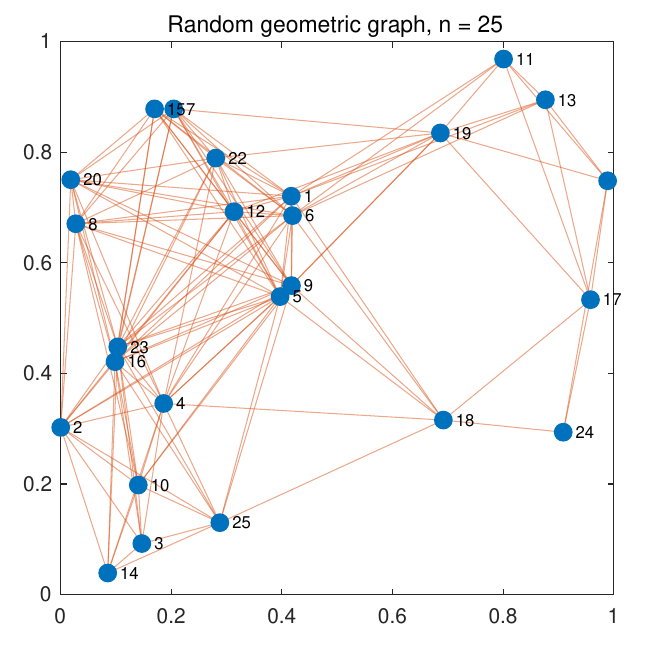}
\caption{Demonstration of a random geometric graph with 25 nodes.} 
\label{fig:graph_25nodes}
\end{figure}
\begin{figure}[t]
\centering
\hspace{-2mm}
\includegraphics[width=.45\textwidth]{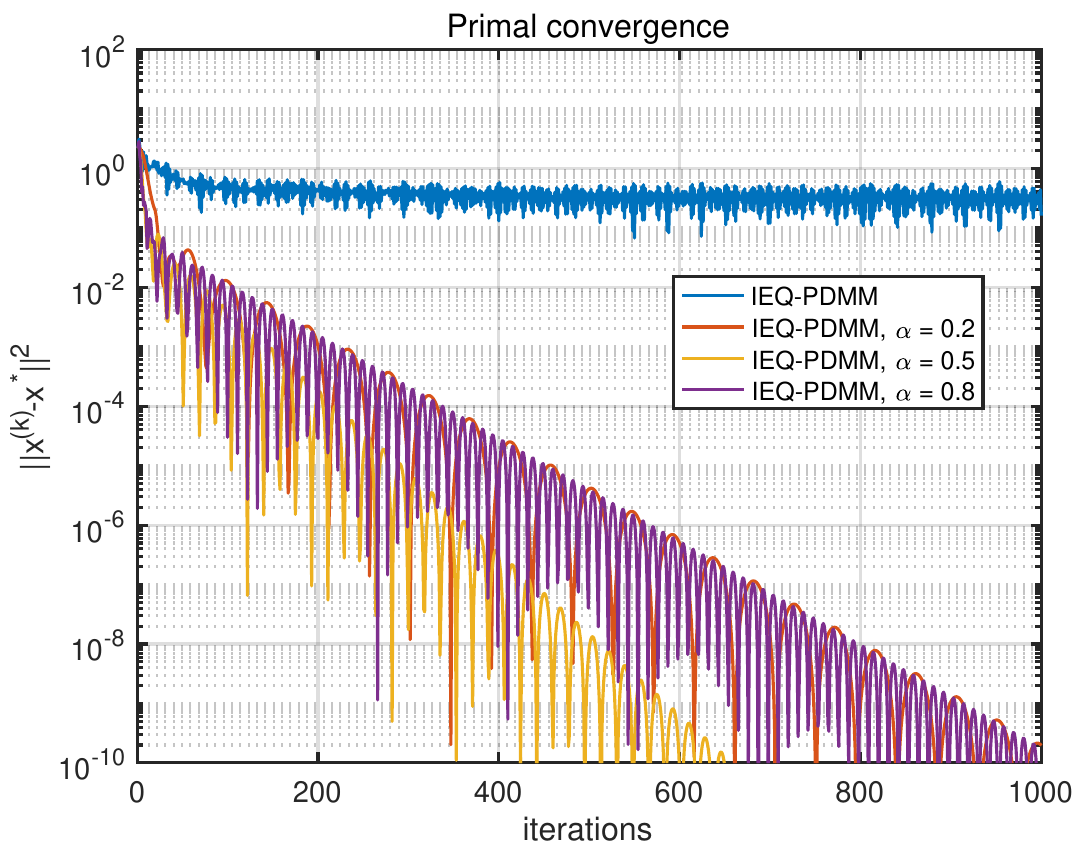}
\caption{Convergence results for IEQ-PDMM for the $\ell_1$ consensus problem \eqref{eq:l1}.} 
\label{fig:pdmm_l1}
\end{figure}
Since the objective function is not uniformly convex, the IEQ-PDMM algorithm is not guaranteed to converge. This is shown in
Figure~\ref{fig:pdmm_l1} (blue curve). In addition,  results are shown when we average IEQ-PDMM for different values of $\alpha$ in which case the algorithm is expected to converge. Note that the case $\alpha=\frac{1}{2}$ corresponds to Douglas-Rachford splitting of the lifted dual function. The step size parameter $c$ was set to $c= 0.4$.

To demonstrate that uniform monotonicity of the subdifferential $\partial f$ is sufficient for primal convergence, we consider the following extended problem:
\begin{equation}
\begin{array}{ll} \text{minimise} & \displaystyle\sum_{i=1}^n \left(\|x_i-a_i\|_1 + \|x_i-a_i\|_3^3 \right)\\ 
\text{subject to} &   x_i=x_j, \; (i,j)\in {\cal E}.
\rule[4mm]{0mm}{0mm}
\end{array}
\label{eq:l1_monotone}
\end{equation}
That is, compared to problem \eqref{eq:l1}, we added to the $\ell_1$ norm  an $\ell_3$ norm cubed, thereby making the objective function uniformly convex. Note that the resulting objective is not differentiable nor strongly convex. Figure~\ref{fig:l1_monotone} shows convergence results for problem \eqref{eq:l1_monotone}, where the MATLAB function {``fmincon''} was used as the internal optimisation solver. As expected, standard IEQ-PDMM converges for this problem and averaging slows down the convergence rate.
\begin{figure}[t]
\centering
\hspace{-2mm}
\includegraphics[width=.45\textwidth]{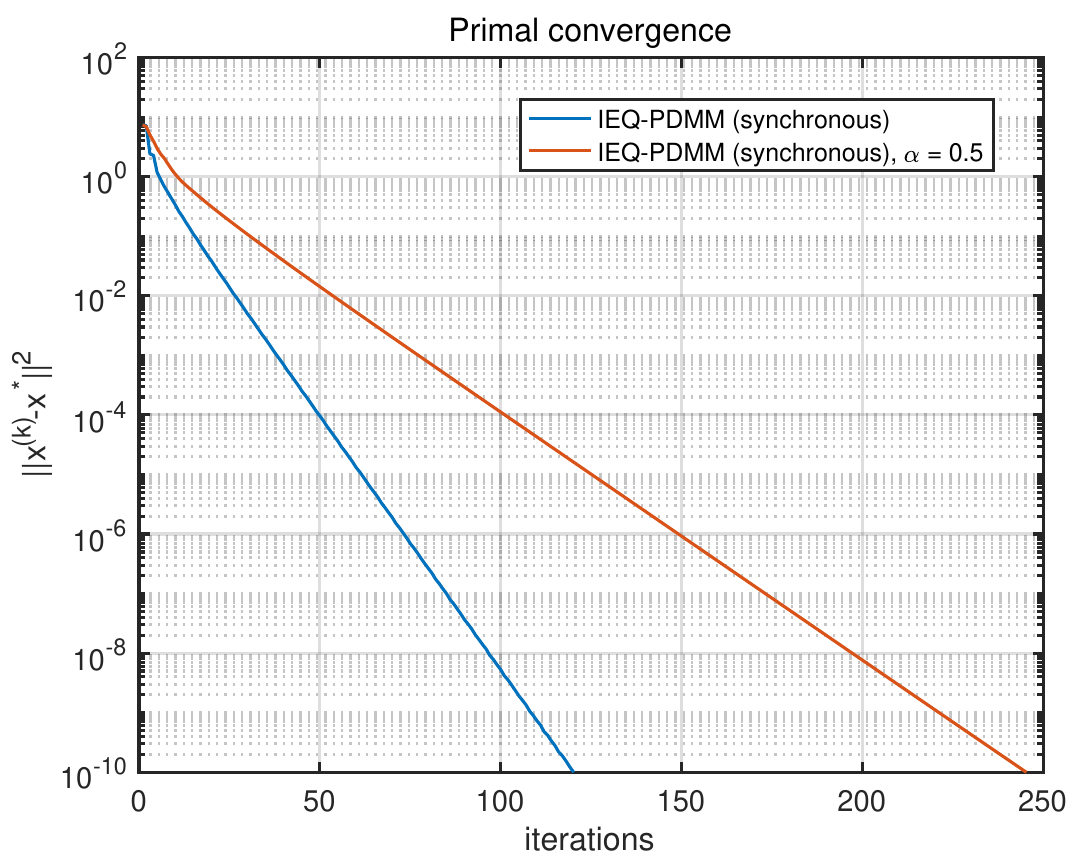}
\caption{Convergence results for IEQ-PDMM for the extended $\ell_1$ consensus problem \eqref{eq:l1_monotone}.} 
\label{fig:l1_monotone}
\end{figure}

\begin{figure}[t]
\centering
\hspace{-2mm}
\includegraphics[width=.45\textwidth]{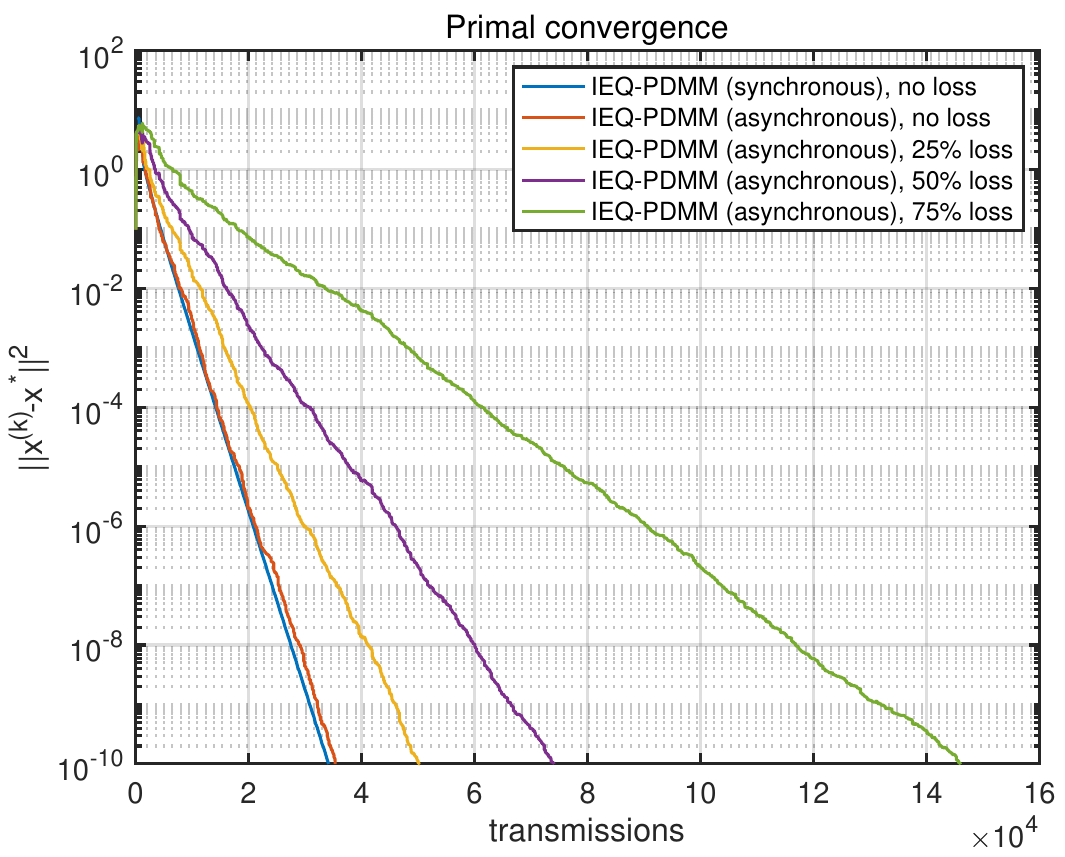}
\caption{Convergence of IEQ-PDMM for synchronous and asynchronous update schemes with different levels of transmission loss.}  
\label{fig:ieq-pdmm_stoch}
\end{figure}
To demonstrate the convergence of stochastic IEQ-PDMM, we consider problem \eqref{eq:l1_monotone} again. Figure~\ref{fig:ieq-pdmm_stoch} shows convergence results for synchronous and asynchronous IEQ-PDMM, where again the MATLAB function {``fmincon''} was used as the internal optimisation solver. The blue curve shows the result for synchronous IEQ-PDMM and is identical to the blue curve in Figure~\ref{fig:l1_monotone}. However, in order to make a meaningful comparison between synchronous and asynchronous update schemes, the convergence results are presented as a function of number of transmission rather than number of iterations. We can observe that synchronous and asynchronous IEQ-PDMM have similar convergence rates. In addition, the algorithm is robust against transmission failures and converges for all loss rates; the convergence rate decreases proportional to the loss rate, similar to what has been observed for equality constraint PDMM \cite{jor:23}.

\subsection{Target localisation}

The second simulation considers an application of network linear programming (LP) for target localisation. We consider a set of $n$ sensors randomly distributed in a unit cube which have to detect a target location $x_{\rm t}\in\R^d$. The sensors could be, for example, cameras, microphones or radars. We assume that each sensor has focused on the target by steering a beam towards the target, where we have added zero-mean Gaussian noise to the true direction to model uncertainty in the direction-of-arrival. We will model the beam as the intersection of a finite number of half-planes. In our two-dimensional example scenario, we will use two half-planes to model the beam pattern so that the sensing region is modeled by a cone. Figure~\ref{fig:target} shows such a set-up, where we have four sensors indicated by the blue dots. The dashed blue lines indicate the hyperplanes (lines in $\R^2$) modeling the sensing beams. 
\begin{figure}[t]
\centering
\includegraphics[width=.38\textwidth]{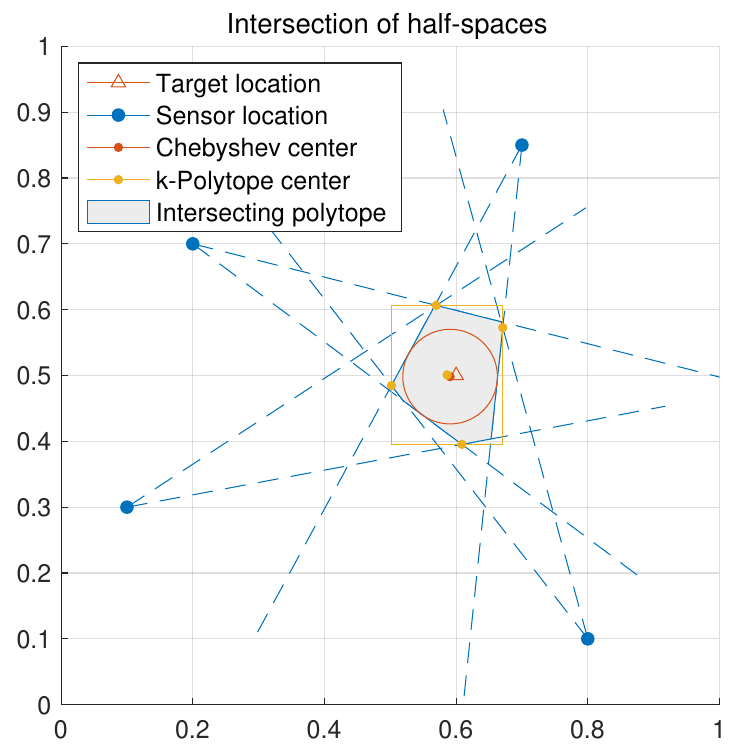}
\caption{Target location estimation by collaboratively computing the intersection of convex polytopes.} 
\label{fig:target}
\end{figure}
The intersection of the regions detected by the sensors (grey area in Figure~\ref{fig:target}) can be used to estimate the target location. Since this is the intersection of half-planes, this region is a polytope which itself is convex and non-empty. The goal is to find an inner approximation of the polytope by computing the largest Euclidean ball contained in it. The centre of the optimal ball is called the Chebyshev centre of the polytope and is the point deepest inside the polytope, i.e., farthest from the boundary. A polytope, in general, can be described as 
\[
{\cal P} = \{ x\in\R^d : a_\ell^T x \leq b_\ell, \,\ell = 1,\ldots,m\},
\]
where $m$ is the number of hyperplanes defining the polytope,
and a ball as ${\cal B} = \{ x_c + u : \|u\| < r\}$, where $x_c\in\R^d$ is the centre  and $r\in\R$ the radius of the ball. Our task is to maximise $r$ subject to the constraint ${\cal B} \subseteq {\cal P}$. Finding the Chebyshev centre can be determined by solving the LP \cite{boy:04}
\[
\begin{array}{ll} \text{maximise} & r, \\\rule[4mm]{0mm}{0mm}
\text{subject to} & a_\ell^Tx_c + r \|a_\ell\| \leq b_\ell, \; \ell=1,\ldots,m.
\end{array}
\]
In order to solve this problem distributed, we introduce local variables $x_i$ and $r_i$ at each node and add the additional constraint that $x_i=x_j$ and $r_i=r_j$ for all $(i,j)\in{\cal E}$, where ${\cal E}$ is the set of edges (communication links) in the  network. That is, we solve the LP
\begin{equation}
\begin{array}{ll} \text{maximise} & \displaystyle \sum_{i=1}^n r_i, \\\rule[4mm]{0mm}{0mm}
\text{subject to} & a_\ell^Tx_i + r_i \|a_\ell\| \leq b_\ell, \; \, i\in {\cal V}, \, \ell=1,\ldots,m, \\\rule[4mm]{0mm}{0mm}
&x_i = x_j, r_i=r_j, \; (i,j) \in {\cal E}.
\end{array}
\label{eq:cheb}
\end{equation}
Obviously, \eqref{eq:cheb} is of the form of our prototypical problem with both linear equality and inequality constraints and can, therefore,  be solved using IEQ-PDMM.
Figure~\ref{fig:target} shows the result (red circle and red triangle) for our two-dimensional example in the case of synchronous IEQ-PDMM. Figure~\ref{fig:xcconv} shows convergence results for finding the Chebyshev centre. In this example, we half-averaged the operator $T$ since the objective function is not uniformly convex and the algorithm would fail to converge without averaging.
\begin{figure}[t]
\centering
\hspace{-2mm}
\includegraphics[width=.46\textwidth]{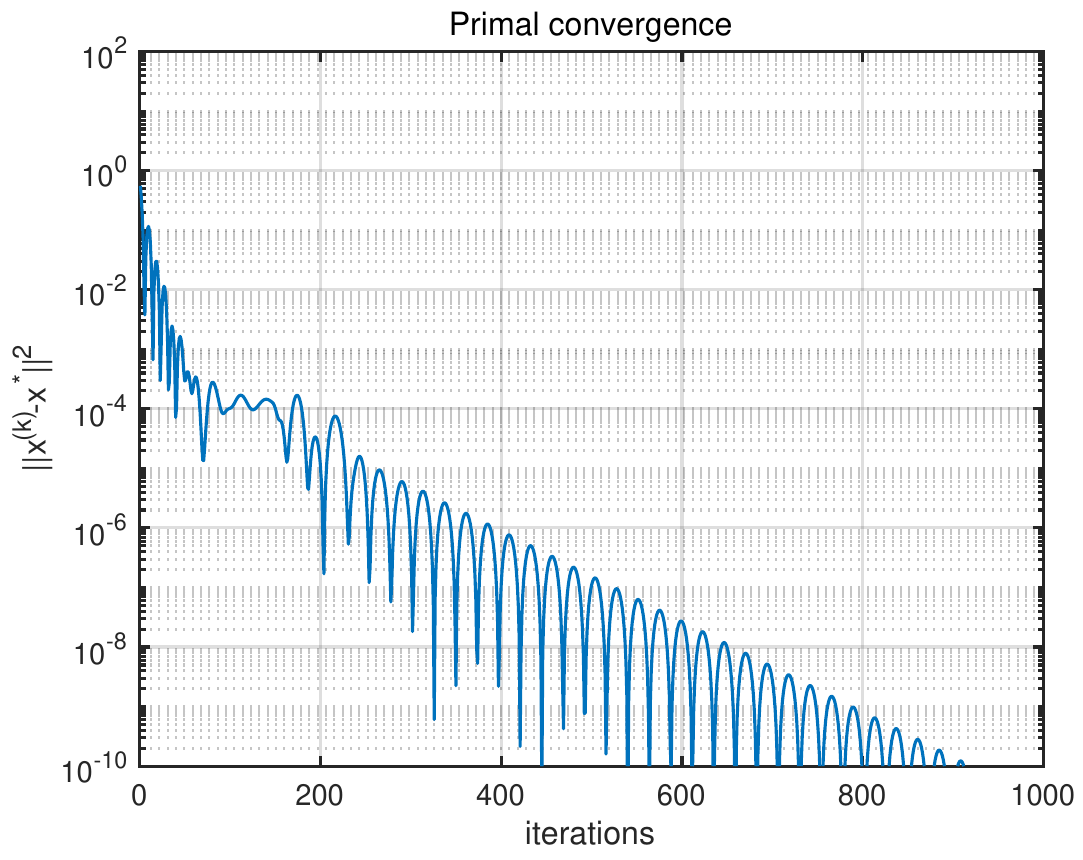}
\caption{Convergence results for finding the Chebyshev centre.} 
\label{fig:xcconv}
\end{figure}

Alternatively, we could find an outer approximation of the polytope by finding the smallest bounding rectangle (or $k$-polytope) enclosing it. In the case of a bounding rectangle, we need to solve four linear programs. Let $d_k, k=1,\ldots,4$, denote the normal vector to the $k$th hyperplane defining the bounding rectangle. We then have to solve the following LPs:
\[
\begin{array}{ll} \text{minimise} & d_k^Tx, \\\rule[4mm]{0mm}{0mm}
\text{subject to} & a_\ell^Tx \leq b_\ell, \; \ell=1,\ldots,m,
\end{array}
\]
for $k=1,\ldots,4$.
Again, in order to solver this problem distributed, we introduce local variables $x_i$ at each node and add the additional constraint that $x_i=x_j$  for all $(i,j)\in{\cal E}$. That is, we solve the LPs
\[
\begin{array}{ll} \text{minimise} & d_k^Tx_i, \\\rule[4mm]{0mm}{0mm}
\text{subject to} & a_\ell^Tx_i  \leq b_\ell, \;   i\in {\cal V}, \,\ell=1,\ldots,m,\\\rule[4mm]{0mm}{0mm}
&x_i = x_j, \; (i,j) \in {\cal E},
\end{array}
\] 
which are of the standard form suitable to be solved using IEQ-PDMM.
The result is shown in Figure~\ref{fig:target} (orange rectangle and corresponding centre point), where we again half-averaged the operator $T$ for the same reason as mentioned above.

\subsection{Comparison with existing algorithms}

In this section we consider a distributed quadratic optimisation problem with inequality constraints over the random geometric graph depicted in Figure~\ref{fig:graph_25nodes}. The problem we consider here is given by
\begin{equation}
\begin{array}{ll} \text{minimise} & {\displaystyle \sum_{i\in {\cal V}} \frac{1}{2}\|x_i-a_i\|^2} \\\rule[4mm]{0mm}{0mm}
\text{subject to} & x_i \leq  x_j \textrm{ for } i<j, \quad  \;\; (i,j)\in \cal  E,
\end{array}
\label{eq:quad_ineq}
\end{equation}
where the data $a_i$ was randomly generated from a Gaussian distribution. 
%
%
We compared three methods. First of all we compared the proposed IEQ-PDMM method with a PDMM variant where we introduced, as is commonly done, additional slack variables. 
The reason for this comparison is to find out if the introduction of slack variables helps 
accelerating the convergence. For every edge constraint we introduce a slack variable $w_{ij}\geq 0$ such that the inequality constraints in \eqref{eq:pdmmop2} can be expressed as 
\begin{align*}
&A_{ij}x_i + A_{ji}x_j + w_{ij} = b_{ij}, \\
&w_{ij} \geq 0.
\end{align*}
Since standard PDMM can only handle equality constraints, the inequality constraints $w_{ij}\geq 0$ can be included in the objective function by introducing the indicator function $I_{\{w\succeq 0\}}$. However, by doing so, the objective function is not separable anymore. This can be easily overcome by introducing two slack variables per edge,  $w_{i|j}\geq 0$ and $w_{j|i}\geq 0$, and add the additional equality constraint $w_{i|j} = w_{j|i}$. With this, the PDMM variant that can handle inequality constraints becomes
\[
\begin{array}{ll} 
\text{minimise} & {\displaystyle \sum_{i\in {\cal V}} \left( f_i(x_i)  + \sum_{j\in {\cal N}_i} {I}_{\{w_{i|j}\geq 0\}} \right)} \\ 
\text{subject to} & A_{ij}x_i + A_{ji}x_j+w_{i|j}+w_{j|i}=b_{ij} \rule[4mm]{0mm}{0mm} \\
& w_{i|j}-w_{j|i}=0  \rule[4mm]{0mm}{0mm} \hspace{30mm} \;\; \raisebox{0.7 em}{$(i,j)\in \cal  E$.}
\end{array} 
\]
We will refer to this algorithm as PDMM-slack. Secondly, we will compare our proposed algorithm to a state-of-the-art ADMM-based algorithm that avoids slack variables, referred to as extended ADMM \cite{He23ADMMIneq}. Since the extended ADMM algorithm is a synchronous update scheme, we only compare synchronous versions of the algorithms.
In both IEQ-PDMM and PDMM-slack, the parameter $c$ was set to $c=0.7$ and in extended ADMM \cite{He23ADMMIneq} the parameters $\nu$ and $\beta$ were set to 
$(\nu,\beta)=(0.5, 0.5)$. These values for $c, \nu$ and $\beta$ were chosen to produce the fastest convergence rate.
To make a fair comparison between the methods, we set $\alpha = \frac{1}{2}$ (ADMM). 
For completeness, we included results for $\alpha = 1$ as well. 
\begin{figure}[t]
\centering
\hspace*{-5mm}
\includegraphics[width=.43\textwidth]{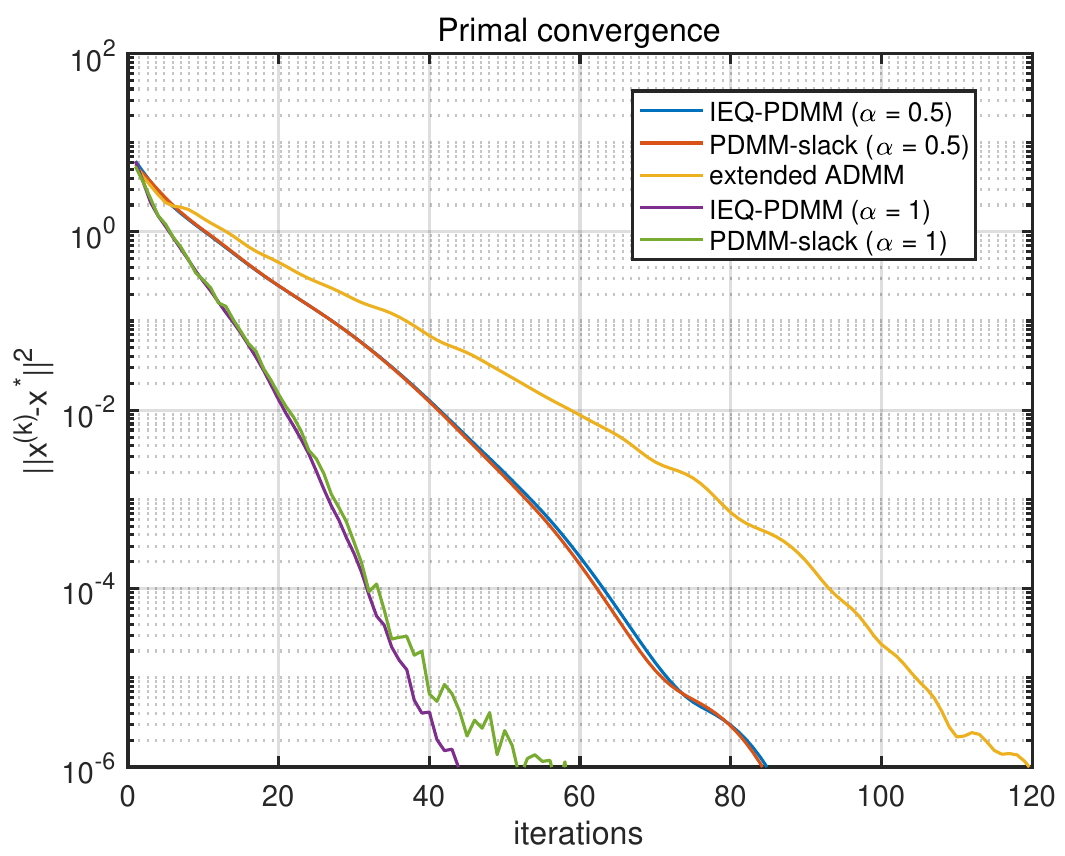}
\caption{Convergence comparison of IEQ-PDMM, PDMM-slack, and extended ADMM over the random geometric graph depicted in Figure~\ref{fig:graph_25nodes}.} 
\label{fig:syn_compare}
\end{figure}
Figure~\ref{fig:syn_compare} visualises the convergence results of the three methods. As can be seen, both PDMM algorithms have similar convergence rates and outperform the extended ADMM algorithm in terms of number of iterations needed to converge to a certain accuracy level. However, the computational complexity of the proposed IEQ-PDMM algorithm is significantly lower than the extended ADMM and PDMM-slack algorithm. This can be seen from Table~\ref{table:time_complexity} which shows the average time (in seconds) needed per iteration for the three methods. Clearly, IEQ-PDMM consumes the least amount of time, demonstrating its efficiency. The PDMM-slack algorithm is most expensive because we have to perform an inequality constraint optimisation problem at each and every iteration (here implemented using the MATLAB program ``quadprog''). The above results indicate that the introduction of slack variables does not improve the convergence rate of PDMM and that it is most efficient to handle the inequality constraints directly by imposing non-negativity constraints on the dual variables as is done in (\ref{eq:dual}).
\begin{table}[t!]
\caption{\small Comparison of computational complexity per iteration (in seconds) } 
\label{table:time_complexity}
\centering
\begin{tabular}{|c|c|c|}
\hline 
 \scriptsize{extended ADMM \cite{He23ADMMIneq}} & \scriptsize{IEQ-PDMM } & \scriptsize{PDMM-slack }  \\
\hline
 \scriptsize{$4.5\cdot 10^{-4}$}  & \scriptsize{$3.9\cdot 10^{-5}$} &  \scriptsize{$8.6\cdot 10^{-2}$}
 \rule[3mm]{0mm}{0mm}\\
\hline
\end{tabular}
\vspace{-.5\baselineskip}
\end{table}

\section{Conclusions}
\label{sec:conclusion}

In this paper we have presented a node-based distributed optimisation algorithm for optimising a separable convex cost function with linear equality and ineqaulity node and edge constraints, termed inequality-constraint primal-dual method of multipliers (IEQ-PDMM). Using monotone operator theory and operator splitting, we derived node-based update rules for solving the problem. To incorporate the inequality constraints, we imposed non-negativity constraints on the associated dual variables, resulting in the introduction of a reflection operator to model the data exchange in the network, instead of a permutation
operator as derived for equality constraint PDMM. We showed how to avoid unnecessary communication between nodes in the case we have node constraints by introducing fictive nodes in the network and highlighted the relation with Peaceman-Rachford splitting and ADMM. We showed convergence results for both synchronous and stochastic update schemes, where the
latter includes asynchronous update schemes and update schemes with transmission losses. The algorithm converges for any CCP cost function when using averaged iterations, and has primal convergence for non-averaged updates in the case the cost function is uniformly convex. 

%


\end{document}